\newtheorem{theorem}{Theorem}[section]
\newtheorem{lemma}[theorem]{Lemma}
\def\H{{\bf H}}
\def\W{{\bf W}}
\def\X{{\bf X}}
\def\S{{\bf S}}
\def\R{{\bf R}}
\begin{document}

\title{Efficient Optimal Joint Channel Estimation and Data Detection for Massive MIMO Systems}

\author{\IEEEauthorblockN{Haider~Ali~Jasim~Alshamary}
\IEEEauthorblockA{Department of ECE, University of Iowa, USA}
\and
\IEEEauthorblockN{Weiyu~Xu}
\IEEEauthorblockA{Department of ECE, University of Iowa, USA}}
%\and
%\IEEEauthorblockN{James Kirk\\ and Montgomery Scott}
%\IEEEauthorblockA{Starfleet Academy\\
%San Francisco, California 96678--2391\\
%Telephone: (800) 555--1212\\
%Fax: (888) 555--1212}}
\maketitle
\begin{abstract}
In this paper,  we propose an efficient optimal joint  channel estimation and data detection algorithm for massive MIMO wireless systems. Our algorithm is optimal in terms of the generalized likelihood ratio test (GLRT).  For massive MIMO systems, we show that the expected complexity of our algorithm grows polynomially in the channel coherence time. Simulation results demonstrate significant performance gains of our algorithm compared with suboptimal non-coherent detection algorithms. To the best of our knowledge, this is the first algorithm which efficiently achieves GLRT-optimal non-coherent detections for massive MIMO systems with general constellations.
\end{abstract}

\IEEEpeerreviewmaketitle

\section{Introduction}
\label{sec:intro}

Massive MIMO wireless systems emerge as an important potential technology for next generation wireless communications. Massive MIMO systems aim to meet the growing demand for higher data rates and wider coverage, by using hundreds of antennas at base stations (BS). In \cite{Thomas}, the author showed that, in a single-cell massive MIMO system, when the number of receive antennas $N \rightarrow \infty$, we can eliminate the negative effects of fast fading and non-correlated noise. Massive MIMO systems can also greatly boost the energy efficiency of cellular wireless communications \cite{Improving_energy}.
Besides larger capacity and higher energy efficiency, massive MIMO systems can boost  the robustness both to unintended man-made interference and to intentional jamming. These advantages make massive MIMO a promising candidate for new 5G wireless communications technologies \cite{NextGeneration,Jeon}.

In this paper, we consider a typical TDD (Time Division Duplexing) massive MIMO wireless systems. In TDD massive MIMO systems, user terminals equipped with single antennas transmit pilot sequences and information data to base stations in uplink transmissions. Exploiting channel reciprocity in TDD systems, the base stations use channel estimation from uplink transmissions for precoding in downlink data transmissions. One of the most prominent challenges in massive MIMO systems is timely acquiring the channel state information (CSI) for a large number of antenna pairs. In fact, unknown CSI is a bottleneck in achieving the full potentials of massive MIMO systems \cite{NextGeneration}. Especially in fast fading environments where the channels change rapidly,  one would need to dedicate a significant portion of the coherence interval for pilot sequences, leaving few times slots for data transmissions. In multi-cell multi-user  massive MIMO systems, due to the scarcity of resources for orthogonal pilot sequences, pilot sequences from neighboring cells will inevitably pollute channel estimations in the current cell, causing the issue of pilot contamination \cite{ MUMIMO,HowmanyAntenna,NextGeneration}.  With the need of obtaining good CSI, pilot contaminations fundamentally limit the achievable data rates in massive MIMO systems.

It is known that joint channel estimation and data detections can greatly alleviate the issue of pilot contamination, and enhance system performance for massive MIMO systems \cite{Wilson, NextGeneration, Muller, Ngo, Joint_channel}. Maximum likelihood (ML) or GLRT-optimal joint channel estimation and data detection algorithms are especially attractive due to their optimality, when the channel statistics are known or unknown.  However, existing efficient joint channel estimation and data detection algorithms for massive MIMO systems are suboptimal, and cannot achieve the optimal non-coherent detection performance.  It is thus of great interest to design efficient optimal non-coherent data detection algorithms for massive MIMO. Moreover, it is important to obtain the  performance limits of ML or GLRT-optimal non-coherent data detection such that they can be used as benchmarks for evaluating low-complexity suboptimal non-coherent data detection algorithms.

For special cases of conventional MIMO systems with few antennas,  there exist a few efficient algorithms which can achieve optimal joint channel estimation and data detection.  In \cite{Hassibi, Ma, Optimal}, the authors introduced sphere decoders to achieve the exact ML non-coherent signal detection for constant-modulus constellations (such as BPSK, QPSK). The sphere decoder demonstrates low computational complexity for high signal-to-noise ratio (SNR) and moderate system dimensions \cite{complexity}. However, the sphere decoder has  exponential complexity at low or constant SNR \cite{complexity2}, and it only works for non-coherent single input multiple output (SIMO) systems.
%However, blind or semiblind and pilot base algorithms \cite{Stoica,Ma}. However, most the efficient non-coherent signal detection algorithms for conventional MIMO systems are suboptimal with limited exceptions like the sphere decoder.
%Generalized-likelihood-ratio-test (GLRT) is utilized in \cite{GLRT} for non-coherent detection,  attaining cubic complexity in channel coherence time $T$.????)

Another line of works use the method of channel state space partition for ML or  GLRT-optimal non-coherent data detection,  attaining polynomial complexity in channel coherence time $T$ \cite{Anastasopoulos,GLRT,Madhow,AAaproach}. For example, \cite{AAaproach} achieves GLRT-optimal non-coherent detection for  pulse-amplitude modulation (PAM) using auxiliary-angle approach of polynomial-time complexity. However, these works are only for single-input single-output systems where the channel coefficient is a single complex variable, or orthogonal space time block coded MIMO systems with a single receive antenna. These algorithms using channel state partition do not work efficiently for general MIMO systems with a few more receive antennas,  not to mention massive MIMO systems with hundreds of receive antennas.

In  \cite{Weiyu}, the authors  proposed an efficient branch-estimate-and-bound algorithm for GLRT-optimal non-coherent data detection for conventional MIMO systems with general constellations.  Although this algorithm from \cite{Weiyu} was the only known efficient GLRT-optimal algorithm for general MIMO systems with multiple transmit and receive antennas and general constellations, this approach does not have a computational complexity polynomial in channel coherence time, and is only for conventional MIMO systems with few receive antennas.

 In this paper, we propose a novel efficient GLRT-optimal joint channel estimation and data detection for massive MIMO systems with general constellations. We show that our algorithm has an expected computational complexity  polynomial in the channel coherence time $T$ for massive MIMO systems.  In its essence, our approach is a branch-and-bound  method on the residual energy of massive MIMO signals after projecting them onto certain subspaces.   To the best of our knowledge, this framework is the first GLRT-optimal non-coherent signal detection algorithm for massive MIMO systems with low computational complexity and optimal performance.  Moreover, our algorithm can provide benchmark performance against which we can evaluate suboptimal low-complexity joint channel estimation and data detection algorithms.

The rest of this paper is organized as follows. Section \ref{sec:problem} presents the system model. In Section \ref{sec:algorithm}, we introduce the new GLRT-optimal non-coherent data detection algorithm. The expected complexity of the algorithm is derived in Section \ref{sec:complexity}. Section \ref{sec:Results} demonstrates the empirical performance and the computational complexity of ML algorithm. %Section \ref{sec:Conclusion} concludes this paper. 
\section{Joint Channel Estimation and Signal Detection for Massive MIMO}
\label{sec:problem}
We consider a TDD massive MIMO wireless system with $N$ receive antennas at the base station, and $M\ll N$ user terminals each equipped with a single antenna. We assume a discrete-time block flat fading channel model where the channel coefficients are fixed for a coherence time $T$.  Across different fading blocks, the channel coefficients take independent values from unknown distributions.  We model the uplink transmission of this system within one channel block by
\begin{equation}
\X=\H \S^*+\W, \label{eq:system}
\vspace{-0.04 in}
\end{equation}
%\vspace{-0.05 in}
where $\X \in \mathcal{C}^{N \times T}$ is the received signal at the BS, $\S^*$ is an $ M \times T$ matrix representing the transmitted signal, whose entries are independent and identically distributed (i.i.d.) symbols from a modulation constellation $\Omega$ ($\Omega$ can be of constant or non-constant modulus, such as 16-QAM), $\W \in \mathcal{C}^{N \times T}$ represents additive noises, and $\H \in \mathcal{C}^{N \times M}$ represents the unknown channel matrix. The elements of $\W$ are i.i.d. random variables following circularly symmetric complex Gaussian distribution $\mathcal{N}(0,\sigma_w^2)$. In each channel coherence block, we further assume that the channel coefficients are deterministic with no prior statistical information known about them \cite{Stoica,Ma}.
%and $\H$ are assumed to be i.i.d. complex Gaussian random variables

Since the channel coefficients take unknown deterministic values, we can formulate the GLRT-optimal joint channel estimation and data detection as a mixed optimization problem over $\H$ and $\S$: %\vspace{-0.05in}
\begin{equation}
\min_{\H,\S^* \in {\Omega}^{M \times T}}\| \X-\H\S^*\|^2,
\label{eq:mixed}
\vspace{-0.04 in}
\end{equation}
%\vspace{-0.05 in}
where $\Omega^{M \times T}$ represents the signal lattice of dimension ${M \times T}$.  We remark that the GLRT-optimal detection is equivalent to ML detection for SIMO systems with constant-modulus modulations,  and for MIMO systems with equal-energy signaling, when the channel coefficients are known to take i.i.d. circularly symmetric complex Gaussian values \cite{Madhow}.

We note that the combinatorial optimization problem in (\ref{eq:mixed}) is a least squares problem in $\H$ , while an integer least-squares problem in $\S^*$, since each element of $\S^*$ is chosen from a discrete constellation $\Omega$ \cite{Hassibi}. Hereby, for any given $\S^*$, the channel matrix $\H$ that minimizes (\ref{eq:mixed}) is given by $\hat{\H}=\X(\S^*)^{\dag}$,
%\begin{equation}
%\hat{\H}=\X(\S^*)^{\dag}, \nonumber \label{eq:opth}
%\end{equation}
where $(\cdot)^{\dag}$ denotes the Moore-Penrose pseudoinverse of a matrix. Since $(\S^*)^{\dag}=\S(\S^*\S)^{\dag}$, $\hat{\H}=\X\S(\S^*\S)^{\dag}$. Substituting this into (\ref{eq:mixed}), we get
\begin{align}
\min_{\H,\S^*}\| \X-&\H\S^*\|^2 = \min_{\S^* \in \Omega^{M \times T}} \| \X( {\bf{I}}-\S(\S^*\S)^{\dag}\S^*) \|^2 \notag\\
&=\min_{\S^*} \text{tr}(\X( {\bf{I}}-\S(\S^*\S)^{\dag}\S^*) \X^*) \notag\\
&=\text{tr}(\X\X^*)-\max_{\S^* \in \Omega^{M \times T}}\text{tr}((\S^*\S)^{\dag}\S^*\X^*\X\S), %\notag\\
\label{eq:optmetric}
%\vspace{-0.1 in}
\end{align}
where $\text{tr}(\cdot)$ is the trace of a matrix. To simplify the mathematical formulation, we define $\Xi$ to be a new $M$-dimensional constellation, each element of which is an $M$-dimensional  vector with its entries taking values from $\Omega$. So the cardinality of $ \Xi$ is $|\Omega|^M$. Then we can rewrite (\ref{eq:optmetric}) as
\begin{equation}\label{eq:optmetric2}
\text{tr}(\X^*\X)-\max_{\S^* \in \Xi^{1 \times T}}\text{tr}((\S^*\S)^{\dag}\S^*\X^*\X\S),
\end{equation}
where we use $\text{tr}(\X^*\X)=\text{tr}(\X\X^*)$. Now by choosing $\rho_{min}$ to be the minimum eigenvalue of $\X^*\X$, the minimization problem in (\ref{eq:optmetric2}) can be equivalently represented by the following optimization problem,
\begin{equation}
\text{tr}(\X^*\X-\rho_{min}I)-\max_{\S^* \in \Xi^{1 \times T}}% \in \Xi^{1 \times T}}
\text{tr}((\S^*\S)^{\dag}\S^*(\X^*\X-\rho_{min}I)\S), \label{eq:max2}
\end{equation}
because $\text{tr} ((\S^*\S)^{\dag}\S^*(\rho_{min}I)\S)$ is a constant. Since $A=\X^*\X-\rho_{min}I$ is positive semidefinite, we can factorize  $A=\R^*\R$ using Cholesky decomposition, where $\R^*$ is the lower triangular matrix of Cholesky decomposition. Finally, using the trace property for product of matrices, (\ref{eq:max2}) can be transformed as follows:
\begin{align}
&\text{tr}({\R^*\R})-\max_{\S^* \in \Xi^{1 \times T}}\text{tr}((\S^*\S)^{\dag}\S^*{\R^*\R}\S) \notag\\
&=\min_{\S^* \in \Xi^{1 \times T}} \text{tr}(\R( {\bf{I}}-\S(\S^*\S)^{\dag}\S^*) \R^*) \notag\\
&=\min_{\S^* \in \Xi^{1 \times T}}||\R^*-\S(\S^*\S)^{\dag}\S^*\R^*||^2. \label{eq:min}
\end{align}
Thus our goal is to  minimize (\ref{eq:min}), based on which our novel algorithm is built. We remark that this approach of transforming the GLRT-optimal detection to (\ref{eq:min}) is novel, very different from existing approaches for GLRT-optimal detection including the sphere decoder \cite{Hassibi} which only works for SIMO wireless systems. We also note that, the channel estimate $\hat{\H}=\X(\S^*)^{\dag}$ can be used for downlink precoding  after solving (\ref{eq:min}).

\section{Efficient GLRT-Optimal  Joint Channel Estimation and Data Detection Algorithm}
\label{sec:algorithm}
Finding the optimal solution to (\ref{eq:min}) is a formidable task, since it requires searching over all the $|\Omega|^{MT}$ hypotheses in the signal space. The exhaustive search appproach  provides the optimal solution, however, its complexity grows exponentially in the channel coherence time. In the special case of SIMO systems, the sphere decoder efficiently solves GLRT-optimal detection (in a different format from (\ref{eq:min})) for both constant-modulus\cite{Hassibi} and nonconstant-modulus constellations \cite{Optimal}. However, the sphere decoders from \cite{Hassibi} and \cite{Optimal} do not work for MIMO systems.

To describe our algorithm, we first introduce a  tree representation of the signal space. Recall that we use $\Xi$ to represent the set of signal vectors of length $M$, where each element of each vector takes value from the constellation $\Omega$. We can thus represent the set of possible matrices for ${\S}^*$ by a tree of $T$ layers. At a layer $0$, we have one root node.   Each tree node at layer $i$, $0\leq i \leq (T-1)$, has $|\Xi|=|\Omega|^{M}$ child nodes. We use $\S^*_{1:i}$ to represent the first $i$ columns of $\S^{*}$, and each possible matrix for $\S^*_{1:i}$ corresponds to a layer-$i$ tree node.  And we call the tree nodes at layer $T$ as leaf nodes, and thus each possible matrix for ${\S}^*$ is represented by a leaf node.
Furthermore, for each possible matrix value for $\S^*$, we define its metric by
\begin{equation}
M_{{\S}^*}=||\R^*-{\S}({\S}^*{\S})^{\dag}{\S}^*\R^*||^2. \label{eq:MetricCom2}
\end{equation}
For a partial matrix $\S^*_{1:i}$, we define its metric by
\begin{equation}
M_{\S^*_{1:i}}=||\R_{1:i}^*-\S_{1:i}(\S^*_{1:i}\S_{1:i})^{\dag}\S^*_{1:i}\R_{1:i}^*||^2 , \label{eq:MetricCom}
\end{equation}
where $1\leq i \leq T$, and $\R^*_{1:i}$ is the first $i$ rows of $\R^*$. Thus solving (\ref{eq:min}) is equivalent to finding an $\hat{\S}^*$ that minimizes $M_{\S^*}$ among all the possible matrix values for $\S^*$.  

To develop our algorithm, we have the following lemma about the comparison between $M_{\S^*_{1:i}}$ and $M_{\S^*}$.
\begin{lemma}
For every $i\leq T$ and any matrix value for $\S^*$,   $M_{\S^*_{1:i}} \leq M_{\S^*}$
\label{lemma:comparison}
\end{lemma}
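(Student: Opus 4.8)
The plan is to recast both metrics as minimum residuals of ordinary least-squares problems over a common feasible set, and then to exhibit the partial metric as what survives after discarding a nonnegative part of the full residual. The key fact I would use is that $\S(\S^*\S)^{\dag}\S^*$ is the orthogonal projector onto the column space of $\S$, so that $\S(\S^*\S)^{\dag}\S^*\R^*$ is the best approximation of $\R^*$ among all products $\S\B$. From this, together with (\ref{eq:MetricCom2}) and (\ref{eq:MetricCom}), I would first establish
\[
M_{\S^*}=\min_{\B}\|\R^*-\S\B\|^2,\qquad M_{\S^*_{1:i}}=\min_{\B'}\|\R^*_{1:i}-\S_{1:i}\B'\|^2,
\]
where $\B$ and $\B'$ both range over complex $M\times T$ matrices; this amounts to checking that the least-squares minimizer $\B=(\S^*\S)^{\dag}\S^*\R^*$ reproduces exactly the projector appearing in each metric.

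The second step is a row-wise decomposition of the full residual. For any fixed $\B$, the squared Frobenius norm $\|\R^*-\S\B\|^2$ splits as a sum over the $T$ rows, and since the $j$-th row of $\S\B$ is the $j$-th row of $\S$ multiplied by $\B$, the first $i$ rows of $\S\B$ coincide with $\S_{1:i}\B$. Retaining only the first $i$ row-contributions and discarding the remaining nonnegative terms gives
\[
\|\R^*-\S\B\|^2\;\ge\;\|\R^*_{1:i}-\S_{1:i}\B\|^2\;\ge\;\min_{\B'}\|\R^*_{1:i}-\S_{1:i}\B'\|^2=M_{\S^*_{1:i}},
\]
where the middle inequality holds because $\B$ is itself an admissible choice of $\B'$. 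Specializing $\B$ to the minimizer of the full problem turns the left-hand side into $M_{\S^*}$, yielding $M_{\S^*}\ge M_{\S^*_{1:i}}$, as claimed.

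I expect the only genuine obstacle to lie in the first step: correctly identifying both metrics as least-squares minima and keeping the matrix dimensions straight, so that the minimizer $\B$ of the full problem is indeed feasible for the partial problem. Everything afterward — the row split, dropping the nonnegative terms, and relaxing the inner minimization — is routine. It is worth noting that this argument never uses the lower-triangular structure of $\R^*$; that structure is needed not for the monotonicity of the metrics, but later, for computing the partial metrics incrementally as one descends the tree.
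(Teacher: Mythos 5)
Your proof is correct and is essentially the paper's own argument in more precise variational language: your intermediate quantity $\|\R^*_{1:i}-\S_{1:i}\B\|^2$ (with $\B$ the full-problem minimizer) is exactly the paper's residual energy $Q$ of the first $i$ indices under the full projection, and your two inequalities correspond one-to-one to the paper's $M_{\S^*}\ge Q$ and $Q\ge M_{\S^*_{1:i}}$. Your least-squares formulation just makes explicit the paper's appeal to ``orthogonal projections minimize residual energy among all linear projections.''
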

\begin{proof}
 We observe that $M_{\S^*}$ is the residual energy  after projecting the columns of $\R^*$ onto the subspace spanned by the columns of $\S$; and $M_{\S^*_{1:i}}$ is the residual energy  after projecting the columns of $\R^*_{1:i}$  (the first $i$ rows of $\R^*$)  onto the subspace spanned by the columns of $\S_{1:i}$ ( $\S_{1:i}$ is the just the first $i$ rows of $\S$). Since orthogonal linear projections minimize the residual energy among all linear projections, we can show,  at the first $i$ indices,   the residual energy $M_{\S^*_{1:i}}$ after applying orthogonal  projections $\S_{1:i}(\S^*_{1:i}\S_{1:i})^{\dag}\S^*_{1:i}$ to $\R^*_{1:i}$, will be no bigger than these indices' residual energy (denoted by $Q$) after applying ${\S}({\S}^*{\S})^{\dag}{\S}$ to $\R^*$.  Moreover, for the orthogonal projection ${\S}({\S}^*{\S})^{\dag}{\S}$ applied to $\R^*$,  the total residual energy  $M_{\S^*}$ over $T$ indices is no smaller than the residual energy $Q$ over the first $i$ indices. Because   $M_{\S^*} \geq Q$ and $Q\geq M_{\S^*_{1:i}}$, we have  $M_{\S^*_{1:i}} \leq M_{\S^*}$.
\end{proof}
Lemma \ref{lemma:comparison} means that $M_{\S^*_{1:i}}$ is a lower bound on $M_{\S^*}$. Intuitively, suppose that $M_{\S^*_{1:i}}$ is too big, then $M_{\S^*}$ must also be big, and $\S^*$ will not minimize (\ref{eq:min}).  This motivates us to propose the following branch-and-bound algorithm for GLRT-optimal joint channel estimation and data detection. In this algorithm, we set a search radius $r$, and use this radius $r$ to regulate a depth-first search over the signal tree structure for the optimal solution to (\ref{eq:min}). In fact, if $M_{\S^*_{1:i}}>r^2$, this algorithm will not search among the child nodes of $\S^*_{1:i}$. If the optimal solution is not found under the current radius $r$, we will increase the search radius $r$ for new searches until the optimal solution is found. The description of our algorithm is given in Algorithm \ref{algo_disjdecomp}.
\label{sec:Algorithm}
%\IncMargin{1em}
\begin{algorithm}[hb]
\SetKwData{Left}{left}\SetKwData{This}{this}\SetKwData{Up}{up}
\SetKwFunction{Union}{Union}\SetKwFunction{FindCompress}{FindCompress}
\SetKwInOut{Input}{input}\SetKwInOut{Output}{output}
\Input{radius $r$, matrix $\R$, constellation $\Xi$ and a $1 \times T$
index vector $I$}
\Output{The transmitted signal $\S^*$}
\BlankLine
%\emph{special treatment of the first line}\;
 Set $i=1$, $I(i)=1$ and set $\S^*_{1:i}=\Xi(I(i))$.

(Computing the bounds) Compute the metric $M_{\S^*_{1:i}}$. If
$M_{\S^*_{1:i}}>r^2$, go to 3; else, go to 4;

(Backtracking) Find the smallest $1 \leq j \leq i$ such
that $I(j)<|\Xi|$. If there exists such $j$, set $i=j$ and go to
5; else go to 6.

If $i=T$, store current $\S^*$, update $r^2=M_{\S^*_{1:T}}$ and go to 3; else set $i=i+1$, $I(i)=1$ and
$\S^*_{1:i}=\Xi(I(i))$, go to 2.

Set $I(i)=I(i)+1$ and $S^*_{i}=\Xi(I(i))$.
Go to 2.
If any sequence $\S^*$ is ever found in Step 4, output the latest
stored full-length sequence as the ML solution; otherwise, double $r$
and go to 1.
%\vspace{0.05 in}
\caption{ML channel estimation and signal detection algorithm.}
\label{algo_disjdecomp}
\end{algorithm} %\DecMargin{1em}
%\vspace{-0.051 in}
%In order to simplify the complexity analysis, we further modify step 6 of the ML algorithm: ``If any sequence $\s^*$ is ever found in step 4, the output of the latest stored full-length sequence will be the ML solution; otherwise, let $r=\infty$ and go to step 1''. We call such decoder as ``modified sphere decoder''. We emphasize that this change will not effect the optimality of the algorithm.
%\vspace{-0.1 in}
Algorithm \ref{algo_disjdecomp} is GLRT-optimal:
\begin{theorem}
Algorithm \ref{algo_disjdecomp} %in \ref{sec:Algorithm}
gives the optimal solution to (\ref{eq:min}).
\label{thm:ltt1}
%\vspace{-0.1 in}
\end{theorem}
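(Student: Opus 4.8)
The plan is to read Algorithm \ref{algo_disjdecomp} as a depth-first branch-and-bound search over the signal tree and to show that its pruning rule, which is licensed by Lemma \ref{lemma:comparison}, never discards the branch containing a global minimizer of (\ref{eq:min}). Write $M^* = \min_{\S^* \in \Xi^{1\times T}} M_{\S^*}$ for the optimal value and let $\hat{\S}^*$ be any minimizer, so that $M_{\hat{\S}^*} = M^*$. Two things must be established: first, that the procedure terminates; and second, that the full-length sequence it outputs in Step 6 actually attains $M^*$.

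For correctness I would argue by a loop invariant on the running radius. Observe that $r^2$ is only ever decreased, and only in Step 4, where it is reset to the metric $M_{\S^*_{1:T}}$ of a newly reached leaf; since $M_{\S^*_{1:T}} = M_{\S^*} \geq M^*$ for every leaf, the value of $r^2$ can never drop below $M^*$ during a pass whose initial radius already satisfies $r_0^2 \geq M^*$. Granting this invariant, Lemma \ref{lemma:comparison} gives $M_{\hat{\S}^*_{1:i}} \leq M_{\hat{\S}^*} = M^* \leq r^2$ for every prefix $\hat{\S}^*_{1:i}$ of the minimizer, so no prefix of $\hat{\S}^*$ ever triggers the strict pruning test $M_{\S^*_{1:i}} > r^2$ in Step 2. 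Consequently the depth-first traversal, which is exhaustive over all un-pruned nodes of the finite tree, must eventually reach the leaf $\hat{\S}^*$, store it, and set $r^2 = M^*$. Because each stored leaf has metric no larger than the radius in force when it is reached, the stored metrics are non-increasing, so the ``latest stored'' sequence reported in Step 6 is the best found; once $\hat{\S}^*$ has been stored no strictly smaller metric exists, and the reported output attains $M^*$.

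Termination then follows from the radius-doubling rule. The signal space $\Xi^{1 \times T}$ is finite, so $M^*$ is a finite number; after finitely many doublings the starting radius satisfies $r_0^2 \geq M^*$, and by the argument above the branch of $\hat{\S}^*$ survives, so a leaf is reached and that pass is declared successful rather than triggering another doubling. Conversely, any pass with $r_0^2 < M^*$ prunes every leaf and simply doubles $r$. Each individual pass is a standard depth-first search of a tree of depth $T$ with finite branching $|\Xi|$, hence terminates, and the doubling phase is finite, so the whole algorithm halts with a correct output.

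The main obstacle I anticipate is reconciling the dynamic tightening of the radius with the soundness of pruning: because $r^2$ is repeatedly shrunk as better leaves are found, one must rule out the possibility that a later, smaller radius prunes away the branch leading to $\hat{\S}^*$. The care needed lies precisely in pinning down the invariant $r^2 \geq M^*$ and in checking that the pruning test is strict, so that prefixes of $\hat{\S}^*$ (which by Lemma \ref{lemma:comparison} have metric at most $M^*$) are retained even in the boundary case of equality. Everything else — the tree bookkeeping of Steps 3--5 and the leaf-storage logic of Step 4 — is routine once this invariant is in place.
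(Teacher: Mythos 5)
Your proof is correct and follows exactly the route the paper intends: the paper dispatches this theorem in a single sentence (``a result of Lemma~\ref{lemma:comparison} and the branch-and-bound search''), and your argument is the natural filling-in of that sketch, with the invariant $r^2 \ge M^*$ and the strictness of the pruning test $M_{\S^*_{1:i}} > r^2$ being precisely the details that make the one-line claim rigorous. No gaps; if anything, your write-up is more careful than the paper's.
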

This theorem is a result of Lemma \ref{lemma:comparison} and the branch-and-bound search over the signal space. 
\subsection{Metric calculation and initial radius $r$}
To compute $M_{\S^*_{1:i}}$, we can have a constant computational complexity independent of $T$,  by recursive calculations over tree structure.
The metric in (\ref{eq:MetricCom}) is equivalent to
\begin{equation}
M_{\S^*_{1:i}}=\text{tr}(\R^*_{1:i}\R_{1:i})-\text{tr}((\S^*_{1:i}\S_{1:i})^{\dag}\S^*_{1:i}{\R^*_{1:i}\R_{1:i}}\S_{1:i}). \label{eq:MetricCom3}
\end{equation}
From (\ref{eq:MetricCom3}), we can calculate the metric $M_{\S^*_{1:i}}$ efficiently. First, the term $\text{tr}(\R^*_{1:i}\R_{1:i})$ can be precomputed. Second, after defining a $T\times M$ matrix $A_{i}=\R_{1:i}\S_{1:i}$, we can update $A_{i+1}$ sequentially as $A_{i+1}=A_i+\R_{i+1:i+1}\S_{i+1:i+1}$. Similarly, we can define $M \times M$ matrix $B_i=\S^*_{1:i}\S_{1:i}$ and then sequentially update $B_{i+1}=B_{i}+\S^*_{i+1:i+1}\S_{i+1:i+1}$. Furthermore, the complexity of calculating $B^{\dag}_{i+1}$ is  $O(M^2)$ using matrix inversion lemma, where $B^{\dag}_{i+1}=(B_i+\S^*_{i+1:i+1}\S_{i+1:i+1})^{\dag}$.  The complexity of all these recursive updates do not depend on $T$ (noting that only $i$ rows of $A$ are nonzero).

For large $N$, we can choose the radius $r^2=cN $, where $c$ is any sufficiently small constant (please the next section for justifications). In fact, one can also use best-first tree search to find the optimal solution while  avoiding picking an $r$ beforehand. 
\section{Expected Computational Complexity}
\label{sec:complexity}

The computational complexity of our tree search based algorithms is mainly determined by the number of visited nodes in each layer. By ``visited nodes'', we mean the partial sequences $\S^*_{1:i}$ for which metric $M_{\S^*_{i}}$ is computed. The fewer the visited nodes, the lower computational complexity of our tree search algorithm has. {In this section, we show that the expected number of visited nodes  will grow linearly with $T$ under a sufficiently large number of receive antennas.}
To analyze the expected number of visited nodes, we assume that the channel coefficients are i.i.d. complex Gaussian random variables following distribution $\mathcal{N}(0,1)$. We also assume that the $M$ users send $M$ orthogonal pilot sequences between time indices 1 and $M$.
\begin{theorem}
Let $M$ be fixed, and let $r^2=c N$, where $c$ is any sufficiently small positive constant. Then for the tree search algorithm, the expected number of visited points at layer $i$  converges to $|\Xi|=|\Omega|^M$ for $i \geq (M+1 )$, as the number of receive antennas $N$ goes to infinity. The tree search algorithm only visits one tree node at each layer $i< (M+1)$.
\label{thm:complexity_fixedM}
\end{theorem}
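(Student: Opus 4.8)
The plan is to show that the radius test $M_{\S^*_{1:i}}\le r^2$ prunes every \emph{wrong} partial sequence with probability tending to one, so that each explored layer contains only the $|\Xi|$ children of a single surviving parent. First I would reduce the statement to a count of survivors. Call $\S^*_{1:i}$ a \emph{survivor} if $M_{\S^*_{1:i}}\le r^2$. By Lemma~\ref{lemma:comparison} the metric is nondecreasing along any root-to-leaf path, so every ancestor of a survivor is again a survivor; hence the survivors at layer $i$ are exactly $\{\S^*_{1:i}:M_{\S^*_{1:i}}\le r^2\}$ and each is reached by the depth-first search. Since every surviving node spawns exactly $|\Xi|$ children whose metrics are then computed, the number of visited nodes at layer $i$ equals $|\Xi|$ times the number of survivors at layer $i-1$; the radius updates in Step~4 only shrink $r$ and can therefore only remove survivors. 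Because the $M$ orthogonal pilots occupy slots $1,\dots,M$ and are known, the tree has a single (automatically surviving, since $\S_{1:i}$ spans $\mathbb{C}^i$ and the metric is $0$) node per layer for $i\le M$, which proves the second claim. It then suffices to prove $\mathbb{E}[\#\{\text{survivors at layer }j\}]\to 1$ for every $j\ge M$.

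Next I would establish the metric limit by concentration. Writing $\S_0^*$ for the true transmitted matrix, so $\X=\H\S_0^*+\W$, the law of large numbers gives $\tfrac1N\H^*\H\to I_M$ and $\tfrac1N\W^*\W\to\sigma_w^2 I_T$ in probability while the cross terms $\tfrac1N\S_0\H^*\W$ and $\tfrac1N\W^*\H\S_0^*$ vanish, so $\tfrac1N\X^*\X\to \S_0\S_0^*+\sigma_w^2 I_T$. By continuity of eigenvalues $\tfrac1N\rho_{min}\to\sigma_w^2$ (the smallest eigenvalue, using $T>M$), whence $\tfrac1N(\X^*\X-\rho_{min}I)\to \S_0\S_0^*$ and every leading principal block converges. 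Since $\R^*_{1:i}\R_{1:i}$ is the leading $i\times i$ block of $A=\R^*\R$, this yields $\tfrac1N\R^*_{1:i}\R_{1:i}\to(\S_0)_{1:i}(\S_0)_{1:i}^*$, where $(\S_0)_{1:i}$ is the first $i$ rows of $\S_0$. Writing $P_i:=\S_{1:i}(\S^*_{1:i}\S_{1:i})^{\dag}\S^*_{1:i}$ and applying the cyclic trace identity to (\ref{eq:MetricCom3}) gives $M_{\S^*_{1:i}}=\tr\big((I-P_i)\R^*_{1:i}\R_{1:i}\big)$, so that
\[
\tfrac1N M_{\S^*_{1:i}} \;\to\; \tr\big((I-P_i)(\S_0)_{1:i}(\S_0)_{1:i}^*\big)=\|(I-P_i)(\S_0)_{1:i}\|^2
\]
in probability; the limit is the residual energy of projecting the columns of $(\S_0)_{1:i}$ onto the column space of $\S_{1:i}$.

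The crux, and the step I expect to be hardest, is showing this limit is strictly positive for every wrong partial sequence, which is exactly the identifiability supplied by the pilots. For the true sequence the limit is $0$. For a candidate at layer $i\ge M+1$ whose first $M$ rows equal the orthogonal (hence invertible) pilot block but which differs from $\S_0$ somewhere in rows $M+1,\dots,i$, the matrix $\S_{1:i}$ still has full column rank $M$; so if its column space contained that of $(\S_0)_{1:i}$, the two equal-dimensional spaces would coincide and $\S_{1:i}=(\S_0)_{1:i}K$ for an invertible $M\times M$ matrix $K$. Matching the invertible pilot block then forces $K=I$, hence $\S_{1:i}=(\S_0)_{1:i}$, a contradiction. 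Therefore $\|(I-P_i)(\S_0)_{1:i}\|^2>0$ for every wrong candidate, and I would choose $c$ below the minimum of these finitely many positive limits.

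Finally I would assemble the count. With $c$ so chosen, $P\big(M_{\mathrm{true},j}\le cN\big)\to1$ while each $P\big(M_{\mathrm{wrong},j}\le cN\big)\to0$; since for fixed $M$ and $j$ there are only finitely many wrong candidates, summing gives $\mathbb{E}[\#\{\text{survivors at }j\}]\to 1$. Consequently $\mathbb{E}[\#\{\text{visited at layer }i\}]=|\Xi|\,\mathbb{E}[\#\{\text{survivors at }i-1\}]\to|\Xi|$, with the matching lower bound $|\Xi|$ coming from the $|\Xi|$ children of the true prefix, which is always expanded because its metric tends to $0$ and thus survives every radius used. The main obstacle is the strict-positivity/identifiability argument of the third paragraph: without the full-rank pilot block a wrong signal sharing the column space of $\S_0$ would also give a vanishing limit and survive, so it is the pilots that make the per-layer count collapse to $|\Xi|$.
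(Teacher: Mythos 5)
Your proposal is correct and follows essentially the same route as the paper's own (outlined) proof: show that $\X^*\X/N$ concentrates at $\S_0\S_0^*+\sigma_w^2 I$, deduce that the limiting normalized partial metric vanishes for the true prefix and is bounded away from zero for every wrong prefix sharing the pilot block, and choose $c$ below that gap. The only differences are ones of detail---your eigenvalue-continuity step and the rank-$M$, $K=I$ identifiability argument make explicit what the paper asserts via its constant $D>0$, and your per-layer survivor count replaces the paper's $\epsilon$/$(1-\epsilon)$ probability split in the final expectation bound.
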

\vspace{-0.2 in}
Due to space  limitations, we give an outline of the proof of Theorem \ref{thm:complexity_fixedM}.
\vspace{-0.09 in}
\begin{proof} (outline)
We first prove that, the tree search algorithm only visits $|\Xi|=|\Omega|^M$ nodes per layer when $\X^* \X= E[\X^* \X]$, where the expectation is taken over the distribution of channel coefficients.  Then we show that, when $N \rightarrow \infty$, $\X^* \X/N \rightarrow E[\X^* \X]/N$ in probability and that the expected number of visited nodes at layer $i$ ($(M+1)\leq i \leq T$)  approaches $|\Xi|$.

We first note that, the number of visited nodes at layer $i$ ($(M+1)\leq i \leq T$) is equal to $|\Xi|$, if there is one and only one sequence $\widetilde{\S}^*_{1:(i-1)}$ such that $M_{\widetilde{\S}^*_{1:(i-1)}} \leq r^2$. Let us consider the true transmitted sequence $\S^*$ . Then we have
\begin{align}\label{eq:Metric}
{E}&[\X^*\X] =E[(\H \S^*+\W)^*(\H \S^*+\W)]\notag \\
&= \S E[\H^*\H] \S^*+ E[\W^*\W]+ \S E[\H^* \W]+E[\W^*\H] \S^*\notag \\
&=N\S\S^*+N \sigma_{w}^2 I,
\end{align}
where the second equality is from $E[\H\H^*]=N I$ and $E[\H^* \W]=0$.
%when $i\neq j \Rightarrow E[\h_i^* \h_j]=0$
%\begin{equation}
%{E}[\X^*\X]/N=\S \S^*+\sigma^2_{w} I. \label{eq:Metric}
%\end{equation}
%%%%%%%%%%%%%%%%%%
%For the expected case, if we substitute (\ref{eq:Metric}) in (\ref{eq:optmetric}), then the minimization equation will be equal %to zero, which is the ideal case that can be achieved as $N \rightarrow\infty$. Under this assumption, if we set the initial %radius (i.e. the metric of the transmitted signal $\S^*$ ) to zero, then the metric of any other subsequences, rather than %those which emanate from the same branch of the transmitted signal $\S^*$, can not be equal to zero.

Because $\S\S^*$ is of rank $M$ with $M<T$, from (\ref{eq:Metric}), the minimum eigenvalue of ${E}[X^*X]/N$ is $\sigma_{w}^2$.  Then for the tree search algorithm (after scaling $A$ by a constant $N$),  $A={E}[X^*X]/N-\sigma^2_{w} I=\S\S^*$.  From the Cholesky decomposition, we know that $A=\S\S^*=\R^*\R$. This means that the columns of $\R^*$ span the same subspace as the columns of $\S$.  Thus the metric $M_{{\S}^*}=0$, because $||\R^*-{\S}({\S}^*{\S})^{\dag}{\S}^*\R^*||^2$  is precisely the residual energy after projecting the columns of $\R^*$ onto the subspace spanned by the columns of $\S$.   Since $M_{\S^*_{1:i}}\leq M_{{\S}^*}$,  $M_{\S^*_{1:i}}=0$ for all $i$.

Let us instead consider any signal matrix $\bar{\S}$ such that $ \bar{\S} \neq \S$ and $\bar{\S}^*_{1:M}=\S^*_{1:M}$ (namely $\bar{\S}$ shares the same pilot sequences as $\S$). For such $\bar{\S}$,  we can show that  $||\R^*-\bar{\S}(\bar{\S}^*\bar{\S})^{\dag}{\bar{\S}}^*\R^*||^2>0$, and that  $M_{\bar{\S}^*_{1:i}}> 0$ for the first $i$ such that $\bar{\S}^*_{1:i} \neq {\S}^*_{1:i}$.   In fact,  $M_{\bar{\S}^*_{1:i}}$ is no smaller than
$$D=\min_{i>M, \S, \bar{\S}, \S_{1:i}\neq \bar{\S}_{1:i}}  ||\S_{1:i}^*-\bar{\S}_{1:i}(\bar{\S}^*_{1:i}\bar{\S}_{1:i})^{\dag}{\bar{\S}_{1:i}}^*\S^*_{1:i} ||^2>0. $$
Thus for a search radius $r^2<D$, there will be only $T$ tree nodes (namely those from transmitted signal $\S^*$ ) with metric no bigger than $r^2$. This means that the tree search algorithm will visit at most $T |\Xi|$ tree nodes, under the assumption that $\X^* \X= E[\X^* \X]$.

For massive MIMO systems, when $N\rightarrow \infty$,  $\X^* \X/N \rightarrow  E[\X^* \X]/N$ in probability. In fact, we can show that, as $N \rightarrow \infty$,  with probability at least $(1-\epsilon)$, the tree search algorithm will visit at most  $T* |\Xi|$ tree nodes, where $\epsilon>0$ is an arbitrary small number.  With probability $\epsilon>0$, the tree search algorithm will visit at most $|\Xi|^T$ nodes. When $N \rightarrow \infty$,  $\epsilon$ can be pushed small fast enough such that the expected number of visited tree nodes grows linear in $T$.
\end{proof}
%We can think of $\overline{R}$ as a mapping of $M$ dimensional space matrix $\S^*$ onto $T$ dimensional space. Since the rank of matrix $A$ is $M$, consequently $\overline{R}$ has the same rank as well. With this setting, the projection of $\overline{R}$ away from $\S^*$ is nothing but $\overline{R}$ itself.
%%%%%%%%%%%%%%%%
%Now we can turn (\ref{eq:max}) to a minimization problem over the transmitted signal $\S^*$,
%\begin{equation}
%\min_{\S^* \in \Omega^{M \times T}} \overline{R}-P_{\overline{R}|\S^*}, \label{eq:metric}
%\end{equation}
 %where $P_{\overline{R}|\S^*}$ is the projection of $\overline{R}$ away from $\S^*$. The goal here is to find the the %transmitted sequence $\hat{\S}$ that minimize the $T \times T$ residual matrix $C=\overline{R}-P_{\overline{R}|\S^*}$ %which has been shown to be equal to zero as $N$ goes to infinity.
\vspace{-0.09 in}
Moreover, we only need $N$ to grow polynomially in $T$, in order to guarantee that the expected number of visited tree nodes grows polynomially in $T$. In fact, using large deviation bounds for the convergence of  $\X^* \X/N$  to $E[\X^* \X]/N$,  we have the following theorem.
\begin{theorem}
Let $M$ be fixed, and let $r^2=c N$, where $c$ is any sufficiently small positive constant. Then we only need the number of receive antennas $N$ to grow polynomially in $T$, to guarantee that  the expected number of visited points at layer $i$  converges to $|\Xi|$ for $i \geq (M+1 )$.
\label{thm:complexity_polynomialN}
\end{theorem}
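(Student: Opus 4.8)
The plan is to make the ``convergence in probability'' step of the proof of Theorem~\ref{thm:complexity_fixedM} quantitative, replacing it with an explicit large-deviation estimate whose rate I can then balance against the exponential worst-case node count $|\Xi|^{T}$. Recall from that proof that in the idealized regime $\X^{*}\X=E[\X^{*}\X]$ every spurious partial sequence $\bar{\S}^{*}_{1:i}$ (with $i>M$ and $\bar{\S}_{1:i}\neq\S_{1:i}$) has normalized metric at least the uniform gap $D>0$, while the true prefix has metric $0$; choosing $c<D$ in $r^{2}=cN$ therefore prunes every spurious branch and leaves only the $|\Xi|$ children of the single surviving true prefix at each layer. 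Hence the algorithm visits more than $|\Xi|$ nodes at layer $i$ only if some spurious prefix survives, i.e. $M_{\bar{\S}^{*}_{1:i-1}}\le cN$, and the whole task is to bound the probability of this event for a \emph{fixed} spurious sequence and then union-bound over all of them.

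First I would write the metric as a sum over receive antennas. Using $A=\X^{*}\X-\rho_{min}I=\R^{*}\R$ and the projection form of (\ref{eq:MetricCom}), one has $M_{\bar{\S}^{*}_{1:i}}=\text{tr}\big((I-P_{\bar{\S}_{1:i}})A_{1:i,1:i}\big)$, where $P_{\bar{\S}_{1:i}}$ is the orthogonal projector onto the column span of $\bar{\S}_{1:i}$ and $A_{1:i,1:i}$ is the leading block of $A$. Since $\X=\H\S^{*}+\W$, the matrix $\X^{*}\X$ is a sum $\sum_{n=1}^{N}\v_{n}\v_{n}^{*}$ of $N$ i.i.d. contributions (one per receive antenna) with $E[\v_{n}\v_{n}^{*}]=\S\S^{*}+\sigma_{w}^{2}I$, so $M_{\bar{\S}^{*}_{1:i}}$ is a sum of $N$ i.i.d. quadratic-in-Gaussian terms. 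The subtraction of $\rho_{min}(i-\mathrm{rank}\,\bar{\S}_{1:i})$ cancels the noise part of the mean, leaving a mean of $N$ times a signal residual bounded below by $D$. I would then apply a Chernoff/Cram\'er bound to this sum: because $c<D$ sits strictly below the per-sample mean and each summand is sub-exponential, we obtain
\begin{equation*}
P\big(M_{\bar{\S}^{*}_{1:i}}\le cN\big)\;\le\;\exp\!\big(-N/p(T)\big),
\end{equation*}
where $p(T)$ is a fixed polynomial determined by $D-c$ and $\sigma_{w}^{2}$; the polynomial (rather than constant) dependence on $T$ arises because the projector $I-P_{\bar{\S}_{1:i}}$ has rank $\Theta(i)=O(T)$, so the effective number of Gaussian degrees of freedom entering the metric grows with the layer index.

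A union bound over the at most $|\Xi|^{\,i}\le|\Xi|^{T}$ partial sequences at layer $(i-1)$ then gives $E[\#\text{spurious survivors at layer }(i-1)]\le|\Xi|^{T}\exp(-N/p(T))$, and since every node visited at layer $i$ is a child of a layer-$(i-1)$ survivor,
\begin{equation*}
E[V_{i}]\;\le\;|\Xi|\;+\;|\Xi|^{T}\exp\!\big(-N/p(T)\big),
\end{equation*}
where $V_{i}$ denotes the number of nodes visited at layer $i$. Taking $N\ge p(T)\big(T\log|\Xi|+\log T\big)$ — a number of receive antennas growing only \emph{polynomially} in $T$ — drives the second term to $0$, so $E[V_{i}]\to|\Xi|$ for every $i\ge M+1$, which is the assertion of the theorem.

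The hard part is establishing the single-sequence large-deviation bound with explicit and well-controlled $T$-dependence, and two issues must be handled at once. First, the metric mixes the signal residual (mean $\ge ND$, which is what we want) with noise and signal-noise cross terms whose effective dimension grows like $i\le T$; one must show that, after normalization by $N$, their fluctuations shrink fast enough that $M_{\bar{\S}^{*}_{1:i}}\le cN$ stays exponentially rare, and this is exactly what pins down the degree of $p(T)$. Second, the shift $\rho_{min}$ is the data-dependent minimum eigenvalue of $\X^{*}\X$; I would control it by conditioning on the high-probability event $|\rho_{min}/N-\sigma_{w}^{2}|\le\delta$, whose complement is exponentially small in $N$ by the same concentration machinery, and fold its contribution into $p(T)$. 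Finally, the uniformity of the gap $D>c$ across the exponentially growing family of spurious sequences as $T\to\infty$ must be verified, using that each projector has rank at most $M$ so the residual energy of a differing column cannot be annihilated by the low-dimensional projection.
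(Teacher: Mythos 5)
Your proposal is correct and follows exactly the route the paper intends: the paper offers no detailed proof of Theorem~\ref{thm:complexity_polynomialN} beyond the remark that it follows from ``large deviation bounds for the convergence of $\X^*\X/N$ to $E[\X^*\X]/N$,'' and your argument --- writing the metric as a sum of $N$ i.i.d.\ per-antenna quadratic forms, obtaining a lower-tail Chernoff bound of the form $\exp(-N/p(T))$ (with the $\mathrm{poly}(T)$ rate coming from the rank-$(i-M)$ projector and the data-dependent $\rho_{min}$), and union-bounding over the at most $|\Xi|^{T}$ spurious prefixes so that $N=\mathrm{poly}(T)$ suffices --- is a faithful and more explicit elaboration of that sketch. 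The points you flag for verification (uniformity of the gap $D$ in $T$, which follows since shared orthogonal pilots force distinct prefixes to have distinct rank-$M$ column spans, and concentration of $\rho_{min}/N$ around $\sigma_w^2$) are exactly the right loose ends and are handled consistently with the paper's Theorem~\ref{thm:complexity_fixedM} outline.
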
 
\section{Simulation Results}
\label{sec:Results}
In this section, we numerically simulate the performance of our new GLRT-optimal tree search algorithm, comparing it against suboptimal iterative and non-iterative MMSE channel estimation and data detection schemes.
We allow the receiver to know the first $M$ columns of the transmitted signal $\S^*$, which serve as necessary orthogonal pilot sequences o guarantee good error performance.  The non-iterative MMSE channel estimation scheme first uses the pilot sequences to perform MMSE channel estimation, and then uses the estimated channel to detect the transmitted information symbols. The iterative MMSE scheme iteratively exploits the detected data from the previous iteration to perform channel estimation used for data detection in the current iteration.

We consider different numbers of users $M$,  different number sof receive antennas $N$, and different block lengths $T$. We define the SNR as $\text{SNR}=\frac{ E||\H \S^*||^2}{E||\W||^2}$.   In Figure \ref{M2Perform}, we demonstrate the symbol error rate (SER) performance, as a function of SNR, for 16-QAM modulation.  Here, $M=2$, and $T=8$. For $10^{-2}$ SER and $N=50$,  our tree search algorithm provides $5$ dB gain over the iterative MMSE channel estimation scheme. When $N=100$, our method holds $6$ dB gain over the iterative MMSE scheme at $10^{-3}$ SER. Most importantly, our tree search algorithm guarantees providing the GLRT-optimal solution. Figure \ref{M2N200Perform} also demonstrates significant gains for $N=200$ antennas.

In Figure \ref{M2complexity} we evaluate the average number of visited nodes of the tree search algorithm.  Here $T=8$,  $M=2$, and the modulation scheme is 16-QAM . We observe tremendous reduction in the number of hypotheses that need to be tested,  compared with  exhaustive search method. For instance, for $N=100$ and $\text{SNR}=3$dB, exhaustive search requires testing $2.81\times 10^{14}$ hypotheses in each coherence block,  while our tree search algorithm visits only $5.5 \times 10^4$ tree nodes on average.  We remark that, using the same computer for simulation, exhaustive search would need $3.52 \times 10^3$ years to compute the optimal solution for one channel coherence block.
%The smallest possible number of visited points by our algorithm is $(T-M)\times |\Xi| \simeq 1500$. This number is reached by our tree search algorithm when $N=500$ and $SNR=6$dB.

%For constant modulus constellation, QPAK, and $M=4$, Figure \ref{M4Perform} shows the performance of the exact ML algorithm compared with iterative and non-iterative MMSE schemes. We choose block length to be $T=10$. ML achieves roughly 1 dB gain over iterative and non-iterative MMSE schemes for $N=50$, and $SER=10^{-4}$. For constant modulus, MMSE schemes show less gain loss compared with non-constant modulus in Figure \ref{M2Perform} for two reasons: First, the length of the embedded training sequences that the MMSE uses to estimate the channel is almost half of the coherence time. This helps the MMES to estimate the channel accurately, and we can easily realize that since estimating the channel iteratively does not attain better performance from that of non-iterative scheme; in fact, they are equivalent. Second, for QPAK constellation, it is easier for MMSE to make detection decision since it is based on phase difference between the 4 symbols of the QPAK, rather than the phase and amplitude for 16-QAM. Obviously for higher noise circumstances, higher SNR, or longer coherence time, the ML would express higher dB gain over non-optimal schemes.

We plot the average number of visited points as a function of SNR in Figure \ref{M4complexity}, for QPSK modulation, $M=4$, and $T=10$. We observe that increasing $N$ form $50$ to $500$ will greatly reduce the number of visited nodes.  Exhaustive search would need to examine $2.81\times 10^{14}$ hypotheses and will take $2000$ years to calculate the optimal solution for one channel coherence block. %
\begin{figure}[!t]%[!htb]
  \centering
 %\vspace{2.0cm}
  \includegraphics[width =2.5 in]{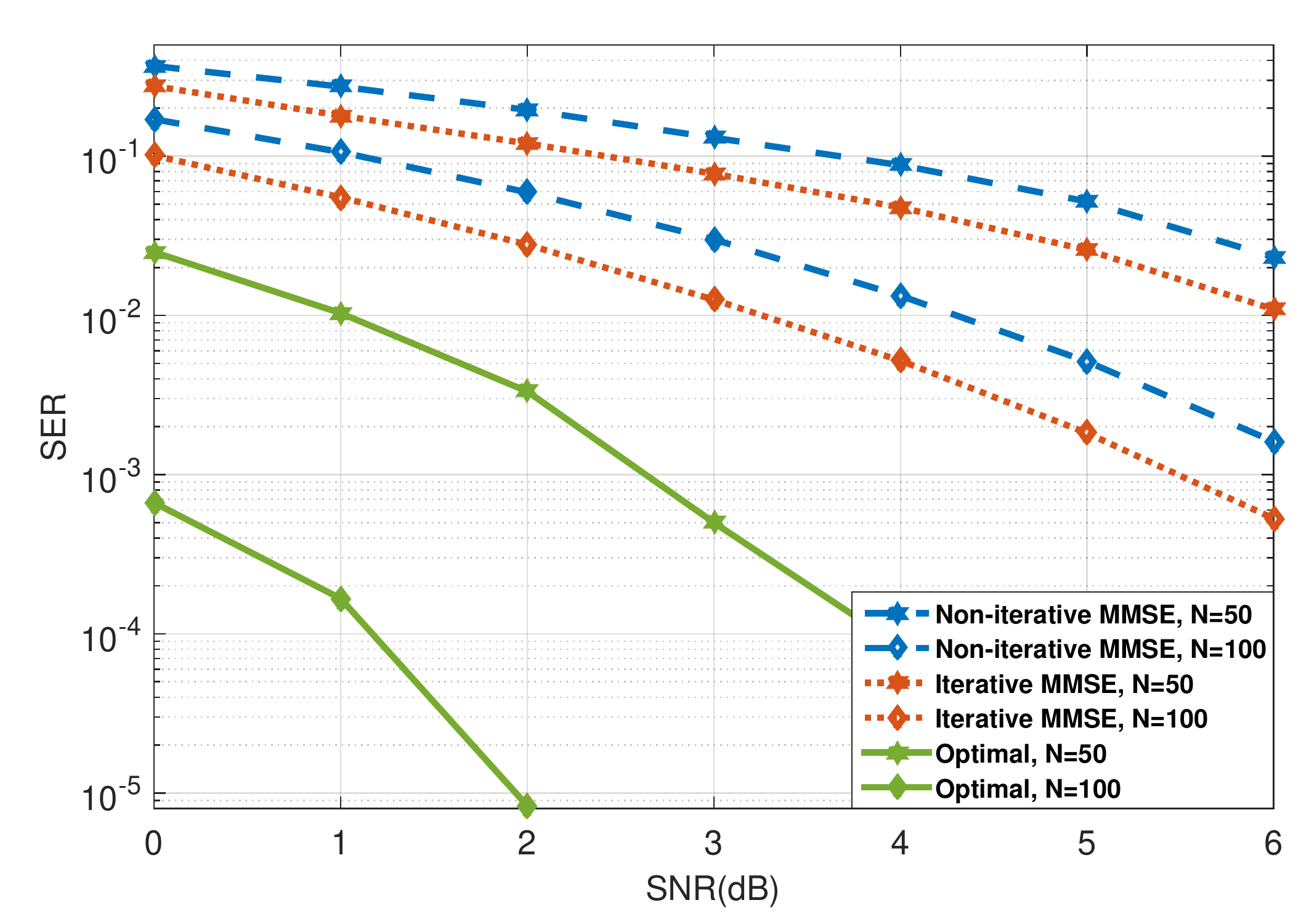}\\
  \caption{SER for iterative MMSE, non-iterative MMSE, and our optimal tree search algorithm. $M=2$, $T=8$, and 16-QAM constellation. }\label{M2Perform}
\end{figure}
\begin{figure}[!t]
  \centering
  \includegraphics[width =2.5 in]{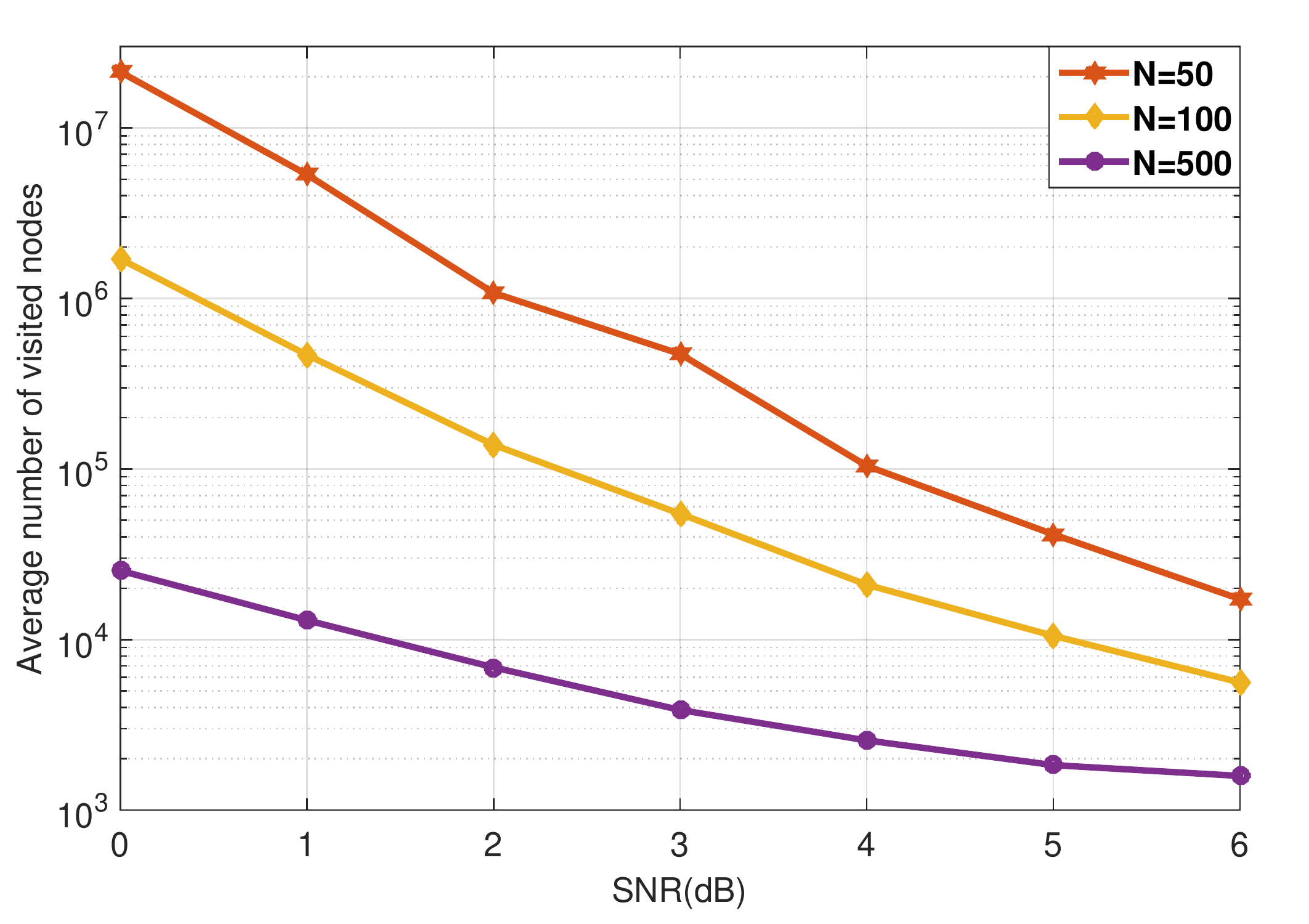}\\
  \caption{ Average number of visited points for $T=8$, and 16-QAM modulation. Exhaustive search will  instead need to test $2.8147\times 10^{14}$ hypotheses}\label{M2complexity}
\end{figure}
\begin{figure}[!t]
  \centering
 %\vspace{2.0cm}
  \includegraphics[width =2.5 in]{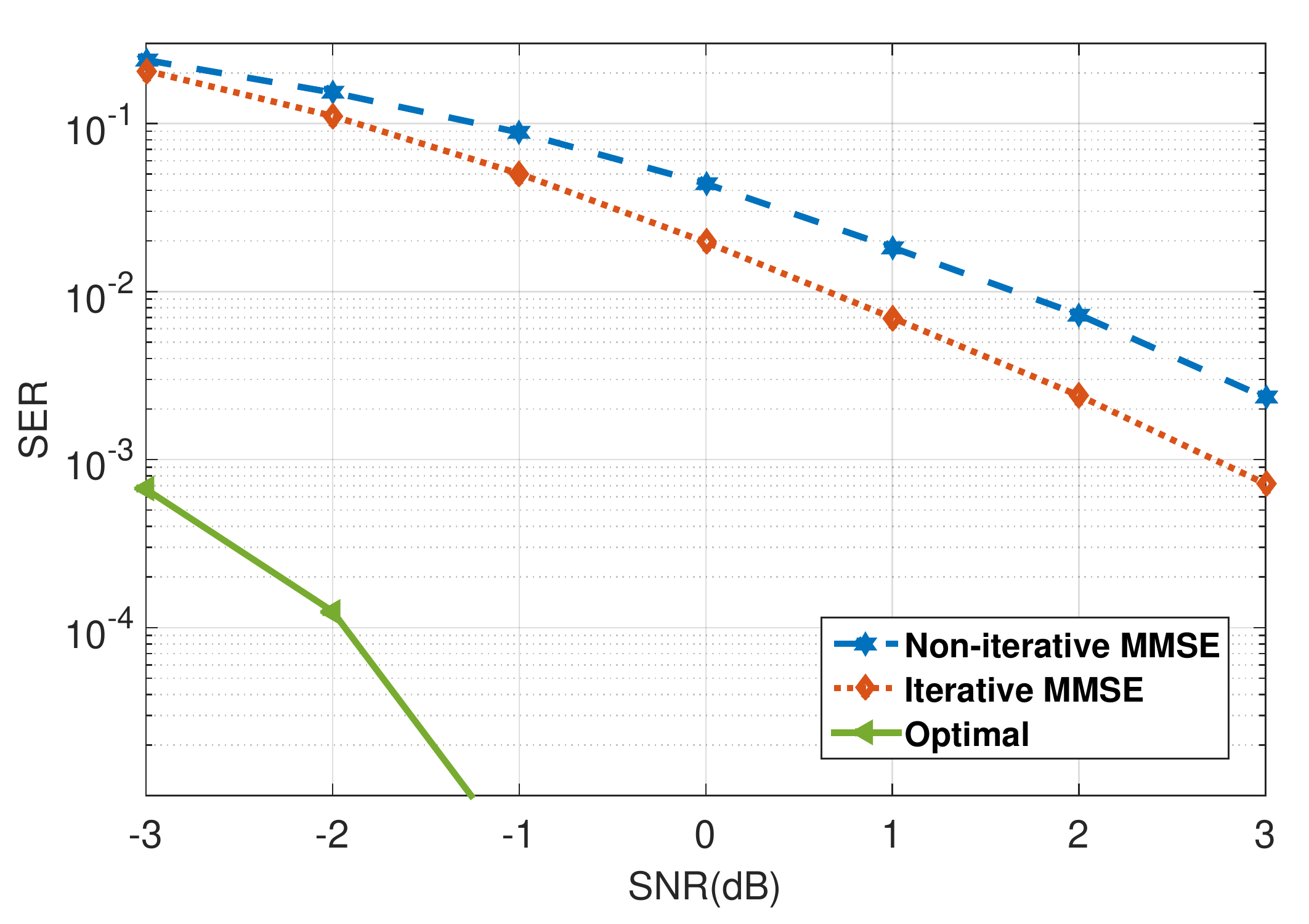}\\
  \caption{SER for iterative MMSE, non-iterative MMSE, and our optimal tree search algorithm. $M=2$, $T=8$, $N=200$, and 16-QAM modulation.}\label{M2N200Perform}
\end{figure}
\begin{figure}[!t]
  \centering
  \includegraphics[width =2.5 in]{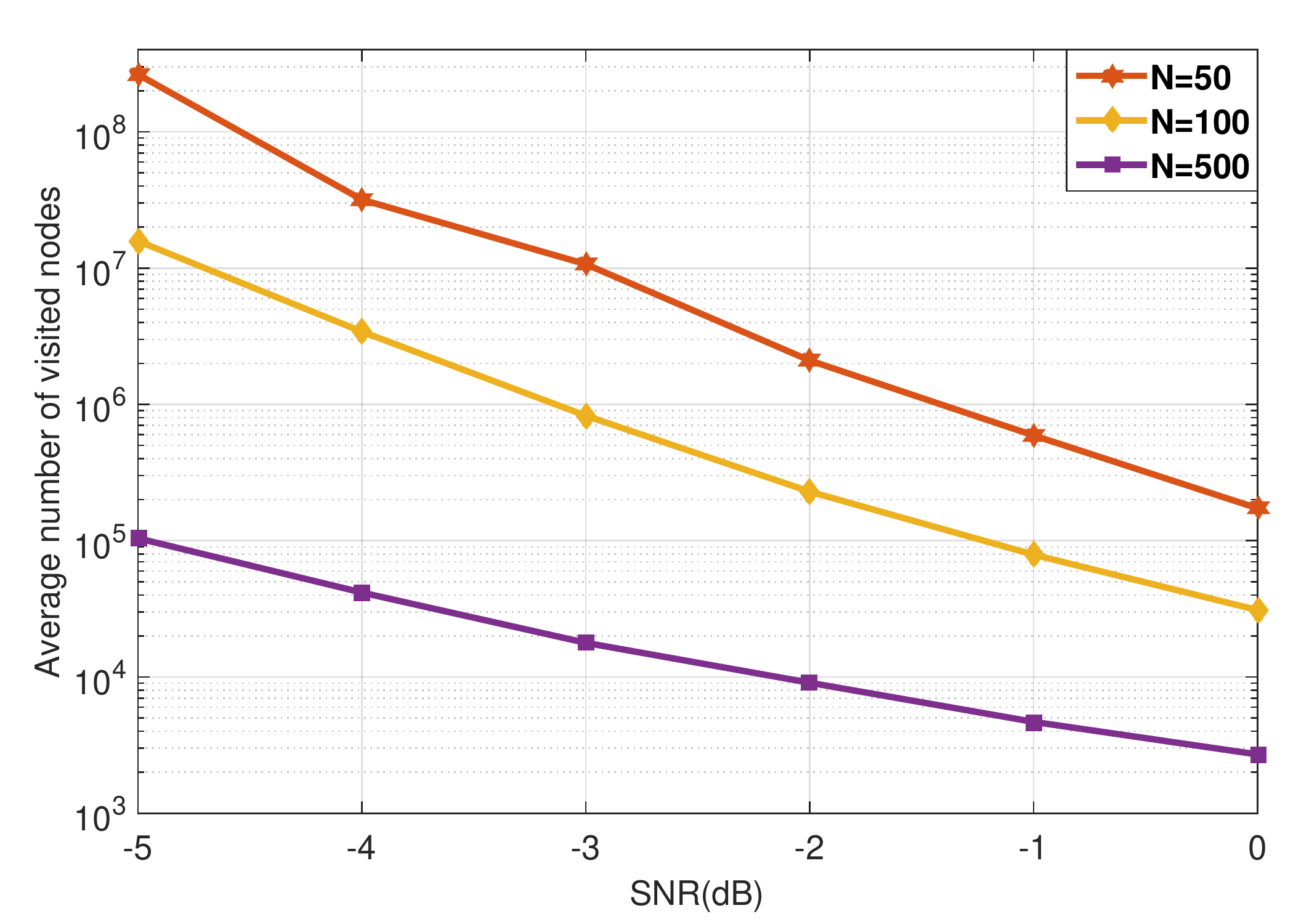}\\
  \caption{ Average number of visited points for $T=10$, $M=4$, and QPSK modulation. Exhaustive search will instead need to test  $2.8147\times 10^{14}$ hypotheses.}\label{M4complexity}
\end{figure}

\end{document}